\pdfoutput=1
\documentclass[10pt,twocolumn,conference]{IEEEtran}
\usepackage{amsmath,amsfonts,amssymb}
\usepackage{graphicx}
\usepackage{cite}
\usepackage{multirow}
\usepackage{algorithm}
\usepackage{algorithmic}
\usepackage{subfigure}

\IEEEoverridecommandlockouts

\newtheorem{theorem}{Theorem}
\newtheorem{definition}{Definition}

\newtheorem{corollary}{Corollary}

\newtheorem{example}{Example}

\begin{document}
\title{Reduced Functional Dependence Graphs and Their Applications}
\author{\authorblockN{Xiaoli~Xu\dag, Satyajit Thakor\ddag \ and Yong~Liang~Guan\dag\\}
\authorblockA{\dag School of Electrical and Electronic Engineering, Nanyang Technological University, Singapore\\
\ddag Institute of Network Coding, The Chinese University of Hong Kong, Shatin, Hong Kong\\
\{xu0002li, eylguan\}@e.ntu.edu.sg, thakor@inc.cuhk.edu.hk}}
\maketitle
\begin{abstract}
Functional dependence graphs (FDGs) are an important class of directed graphs that capture the functional dependence relationship among a set of random variables. FDGs are frequently used in characterizing and calculating network coding capacity bounds. However, the order of an FDG is usually much larger than the original network and the complexity of computing bounds grows exponentially with the order of an FDG. In this paper, we introduce graph pre-processing techniques which deliver reduced FDGs. These reduced FDGs are obtained from the original FDG by removing nodes that are not ``essential''. We show that the reduced FDGs give the same capacity region/bounds obtained using original FDGs, but require much less computation. The application of reduced FDGs for algebraic formulation of scalar linear network coding is also discussed.
\end{abstract}

\begin{IEEEkeywords}
network coding capacity, linear programming bounds, functional dependence graphs, complexity reduction, algebraic formulation.
\end{IEEEkeywords}

\section{Introduction}
Characterizing the capacity region of network coding is an important fundamental problem. For single session networks, it is well known that the capacity region is given by the max-flow bound \cite{AhlCai00}. However, the max-flow bound is no longer tight when multi-session networks are considered. It is shown in \cite{Yan07} that the exact capacity region for general networks can be written as a function of entropy region, $\Gamma_{n}^{\star}$, and some constraints induced by the network topology. Unfortunately, characterization of $\Gamma_{n}^{\star}$ is still open for $n\geq4$ and infinite number of information inequalities need to be considered \cite{Matus07}. An explicit outer bound in terms of the set of polymatroidal functions, $\Gamma_{n}$ is referred as Linear Programming (LP) bound \cite{Yeung08}.

Functional dependence graphs (FDG) are a class of directed graphs that capture the functional dependence relationship among a set of random variables. FDGs are first used in \cite{Kra98} to establish conditional independence among random variables involved in a communication system which is useful to characterize bounds on the capacity of the communication system. Variants of FDGs are used in \cite{{KraSav06}} and \cite{Thakor09} to characterize computable outer bounds on multi-session network coding capacity.
FDG plays an important role in studying the capacity for networks with edge capacity constraints. Besides the progressive \emph{d}-separating edge-set bound \cite{{KraSav06}} and the functional dependence bound \cite{Thakor09} which are obtained using FDGs, many other capacity bounds e.g. \cite{Harvey06}, and the LP bound are closely related with FDGs of the network.  The random variables and the functional dependence constraints which are important in computing the LP bound of a given network are directly reflected by FDGs. 

In this paper, we introduce graph pre-processing techniques to reduce the size of an FDG by removing edges and nodes that are not essential. The resulting FDG is refereed as Reduced Functional Dependence Graph (FDG) which is of smaller size but preserves important properties of the original network. We show that the capacity of the original network can totally be determined by the reduced FDGs. As a result, reduced FDGs can be used to compute bounds on the network coding capacity of a given network with lesser computational resources. Moreover, removing nodes in the original FDGs is equivalent to removing the random variables involved in computing the LP bound of a given network. Hence the complexity of computing LP bound reduces exponentially using the notion of reduced FDGs.


Due to the similarities between an FDG and the directed line graph \cite{Koetter03} of a given network, an FDG can also be used to construct the algebraic formulation of scalar linear network coding. With the proposed reduced FDG, the number of variables required for such formulation will be reduced and the complexity for computing the transition matrix can also be significantly reduced.

The rest of the paper is organized as follows. In section II, network model and formal definition of an FDG are given. Graph pre-processing techniques leading to reduced FDGs are given in section III. Applications of the main results are discussed in section IV, including complexity reduction for computing LP bound and complexity reduction for algebraic formulation of scalar linear network coding. Finally, the paper is concluded in section V.


\section{Background}\label{S:background}
\subsection{Network Model}
A network is represented by a directed acyclic graph $\mathcal G=(\mathcal V,\mathcal E)$, where $\mathcal V$ is the set of nodes and $\mathcal E$ is the set of edges.  For an edge $E=(V,U)\in{\mathcal E}$, define $\mathrm{Tail}(E)=V$ and $\mathrm{Head}(E)=U$. For $V\in \mathcal V$, the set of edges entering into $V$ and leaving $V$ are denoted by $\mathrm{In}(V)$ and $\mathrm{Out}(V)$ respectively. To simplify the description, we also define the set of edges entering into and leaving an edge $E$ as $\mathrm{In}(E)=\{E':\mathrm{Head}(E')=\mathrm{Tail}(E)\}$ and $\mathrm{Out}(E)=\{E':\mathrm{Tail}(E')=\mathrm{Head}(E)\}$ respectively. It is easy to see that if $E=(V,U)$, $\mathrm{In}(E)=\mathrm{In}(V)$ and $\mathrm{Out}(E)=\mathrm{Out}(U)$.

Let $\mathcal S=\{1,\cdots,|\mathcal S|\}$ denote the set of independent information sources available at some nodes (called source nodes) in a network via mapping $a:\mathcal S \mapsto \mathcal V$. The sources are demanded by some nodes in a network called sink nodes. A set of sources demanded by a given sink node is described by the mapping $b:\mathcal V \mapsto \boldsymbol{\mathcal P}(\mathcal S)$ where, $\boldsymbol{\mathcal P}(\mathcal A)$ is the power set of the set $\mathcal A$. Let $\mathcal T$ denote the set of sink nodes in the network and thus the set of sources demanded by the sink $T\in\mathcal T$ is $b(T)$. If each source is demanded by exactly one sink, the network is called multiple-unicast network. Without loss of generality, we assume that $\mathrm{In}(S)=\mathrm{Out}(T)=\emptyset, \forall S\in\mathcal S,T\in\mathcal T$. 
For $E\in\mathcal E$,  $C_E$ denotes the maximal rate that can be conveyed through the link $E$.

\begin{definition}\label{NC}
For a given network $\mathcal G=(\mathcal V,\mathcal E)$, an information rate tuple $\mathbf{R}=(R_S:S\in\mathcal S)$ is achievable if there exists a network code $\textbf{\textsf{C}}=\{f_E: E \in \mathcal E, g_T: T \in \mathcal T\}$ of block length $n$, defined by
\begin{itemize}
\item{For all $S\in\mathcal S, E\in{\mathrm{Out}(S)}$, local encoding functions $f_E: \mathcal Y_S^{(n)} \rightarrow \mathcal U_E^{(n)},$}
\item{For all $V\in\mathcal V\setminus(\mathcal S\cup\mathcal T), E\in \mathrm{Out}(V)$ local encoding functions $f_E: \prod_{E'\in \mathrm{In}(V)}\mathcal U_{E'}^{(n)}\rightarrow \mathcal U_{E}^{(n)},$}
\item{and for all $T\in\mathcal T$, decoding functions $g_T:\prod_{E\in \mathrm{In}(T)}\mathcal U_{E}^{(n)}\rightarrow \prod_{S\in{b(T)}}\mathcal Y_S^{(n)}$}
\end{itemize}
such that
\begin{align*}
\lim_{n\rightarrow\infty} n^{-1} \log |\mathcal Y_{S}^{(n)}|= \lim_{n\rightarrow\infty} n^{-1} H(Y_{S}^{(n)})\geq R_S\\
\lim_{n\rightarrow\infty} n^{-1} H(U_{E}^{(n)})\leq \lim_{n\rightarrow\infty} n^{-1} \log |\mathcal U_{E}^{(n)}|\leq C_{E}
\end{align*}
and  for all $T\in\mathcal T$
$$\lim_{n\rightarrow\infty} \mathrm{Pr}(g_T(U_{E}^{(n)}:E\in \mathrm{In}(T))\neq(X_{S}^{(n)}:S \in b(T)))=0.$$
\end{definition}

The closure of the set of all achievable rate tuples for a given network is called the capacity of the network.

\subsection{Functional Dependence Graph}
Given a network $\mathcal G=(\mathcal V,\mathcal E)$ with sets of random variables $\{Y_S:S\in{\mathcal S}\}$ and $\{U_E:E\in\mathcal E\}$ representing the information generated by sources and carried by edges respectively, a valid network code for the network satisfies the following constraints.
\begin{align}
R_{S}&\leq H(Y_{S})\label{eq:rate}\\
H(Y_{\mathcal S})&=\sum_{S\in{\mathcal S}}{H(Y_S)}\label{sourceIndep} \\
H(U_{\mathrm{Out}(S)}|Y_S)&=0, S\in{\mathcal S}\label{sourceEn}\\
H(U_{\mathrm{Out}(V)}|U_{\mathrm{In}(V)})&=0, V\in{\mathcal V\setminus{(\mathcal S\cup\mathcal T)}}\label{edgeEn}\\
H(Y_{b(T)}|U_{\mathrm{In}(T)})&=0, T\in{\mathcal T}\label{receiverDe}\\
H(U_E)&\leq{C_{E}}, E\in{\mathcal E} \label{edgeCap}
\end{align}
Note that constraint \eqref{sourceIndep} specifies the independence of the sources, \eqref{sourceEn} and \eqref{edgeEn} give encoding constraints, \eqref{receiverDe} corresponds to the decoding requirements, \eqref{edgeCap} means that the entropy of variables carried by any link cannot exceed the link capacity.

Using the encoding and decoding constraints \eqref{sourceEn}-\eqref{receiverDe} for a given network $\mathcal G$ (i.e., local functional dependence constraints at nodes), an FDG can be constructed for the set of source and edge random variables. FDG is defined in \cite{Thakor09} as:

\begin{definition}[Functional dependence graph \cite{Thakor09}]\label{FDG}
Let $\bar{\mathcal V}$ be a set of random variables. A directed graph $\bar{\mathcal G}=(\bar{\mathcal V}, \bar{\mathcal E})$ is called a functional dependence graph for $\bar{\mathcal V}$ if and only if for all $V \in \bar{\mathcal V}$
\begin{equation}\label{dependence}
H(V|\{U:(U,V)\in{\bar{\mathcal E}}\})=0.
\end{equation}
\end{definition}

$\bar{\mathcal G}$ is a directed cyclic graph and of the order $|\bar{\mathcal V}|=|\mathcal E|+|\mathcal S|$. $\bar{\mathcal E}$ is determined by the dominance relationship defined in \eqref{sourceEn}-\eqref{receiverDe}. For $V\in\bar{\mathcal V}$, let $\mathrm{Down}(V)\triangleq\{V':(V,V')\in\bar{\mathcal E}\}$ be the set of immediate downstream nodes (also known as children) of $V$ and $\mathrm{Up}(V)\triangleq\{V':(V',V)\in\bar{\mathcal E}\}$ be the set of immediate upstream nodes (also known as parents) of $V$. Thus, for $S\in\mathcal S$, $\mathrm{Down}(Y_S)=U_{\mathrm{Out}(S)}$ and if $b(T)=S$, $\mathrm{Up}(Y_S)=U_{\mathrm{In(T)}}$. Moreover, for $E\in{\mathcal E}$, $\mathrm{Down}(U_E)=U_{\mathrm{Out}(E)}$ and $\mathrm{Up}(U_E)=U_{\mathrm{In}(E)}$. Provided that the communication network $\mathcal G$ is directed acyclic, every node $Y_S, S \in \mathcal S$ in $\bar{\mathcal G}$ must be a member of a cycle due to the decoding constraints \eqref{receiverDe}.

\section{Main Results}\label{S:RFDG}
\subsection{Reduced Functional Dependence Graph}
For a given network $\mathcal G=(\mathcal V,\mathcal E)$, if it is a connected graph, $|\mathcal E|\geq{|\mathcal V|-1}$. Therefore the order of an FDG of the given network, $|\bar{\mathcal V}|=|\mathcal E|+|\mathcal S|$, is usually much larger than the original network. As the complexity of computing network capacity bounds usually grows exponentially with the size of an FDG, it is desirable to reduce the size of it.


Source random variables $Y_S,\ S\in\mathcal S$ may be regarded as primary random variables introducing randomness into the network. Edge random variables $U_{E}, E \in \mathcal E$ may be regarded as secondary random variables since all edge random variables are function of the source random variables. In this section we focus on eliminating certain edge random variables (and hence corresponding nodes in FDGs) such that the capacity of the network remains unchanged.
\begin{definition}
Given a network $\mathcal G=(\mathcal V, \mathcal E)$, let $E=(V,U)\in\mathcal E$, the reduced network $\mathcal G\setminus E=(\mathcal V', \mathcal E')$ is defined by: $\mathcal V'=\mathcal V$ and $\mathcal E'=\mathcal E\setminus E$. Moreover, the information carried by edges in $\mathrm{In}(E)$ are made available to both node $V$ and $U$, i.e. for $K\in\mathrm{In}(E)$, $\mathrm{Head}(K)=\{U,V\}$.
\end{definition}

The procedure for obtaining $\mathcal G\setminus E$ from $\mathcal G$ is shown in Fig.\ref{F:Network}. Note that the reduced network $\mathcal G\setminus E$ is no longer a traditional network since some edges have more than one destination nodes. However, the definition of capacity still applies.

\begin{figure}[htb]
\centering
\includegraphics[scale=0.55]{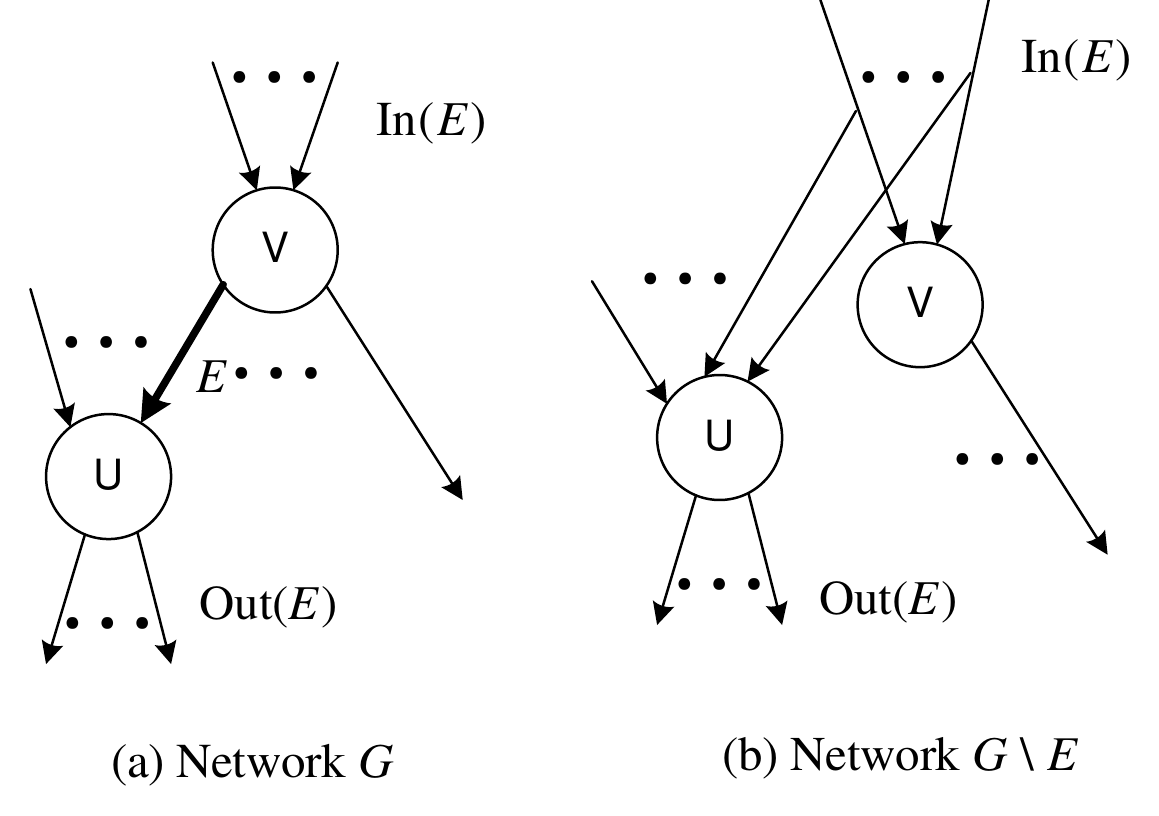}
\caption{Construct new network by removing edge $E$}
\label{F:Network}
\end{figure}

Although the transformation from $\mathcal{G}$ to $\mathcal{G}\setminus E$ makes the network look more complex, it reflects a reduction in their corresponding FDG sub-networks (referred as $\bar{\mathcal{G}}$ and $\bar{\mathcal G}\setminus U_{E}$ respectively) as shown in Fig.\ref{F:FDG}.
\begin{definition}
Given an FDG $\bar{\mathcal G}=(\bar{\mathcal V},\bar{\mathcal E})$, a \emph{reduced} FDG $\bar{\mathcal G}\setminus U_{E}=(\bar{\mathcal V}', \bar{\mathcal E}')$ is given by $\bar{\mathcal V}' \triangleq \bar{\mathcal V}\setminus U_E$ and
\begin{multline*}
\bar{\mathcal E}'\triangleq \left\{\bar{\mathcal E} \setminus\{\bar E \in \bar{\mathcal E}:\mathrm{Head}(\bar E)=U_{E} \textnormal{ or } \mathrm{Tail}(\bar E)=U_E\}\right\}\\
\cup \{(A,B): A\in\mathrm{Up}(U_E), B\in\mathrm{Down}(U_E)\} .
\end{multline*} 
\end{definition}

\begin{figure}[htb]
\centering
\includegraphics[scale=0.38]{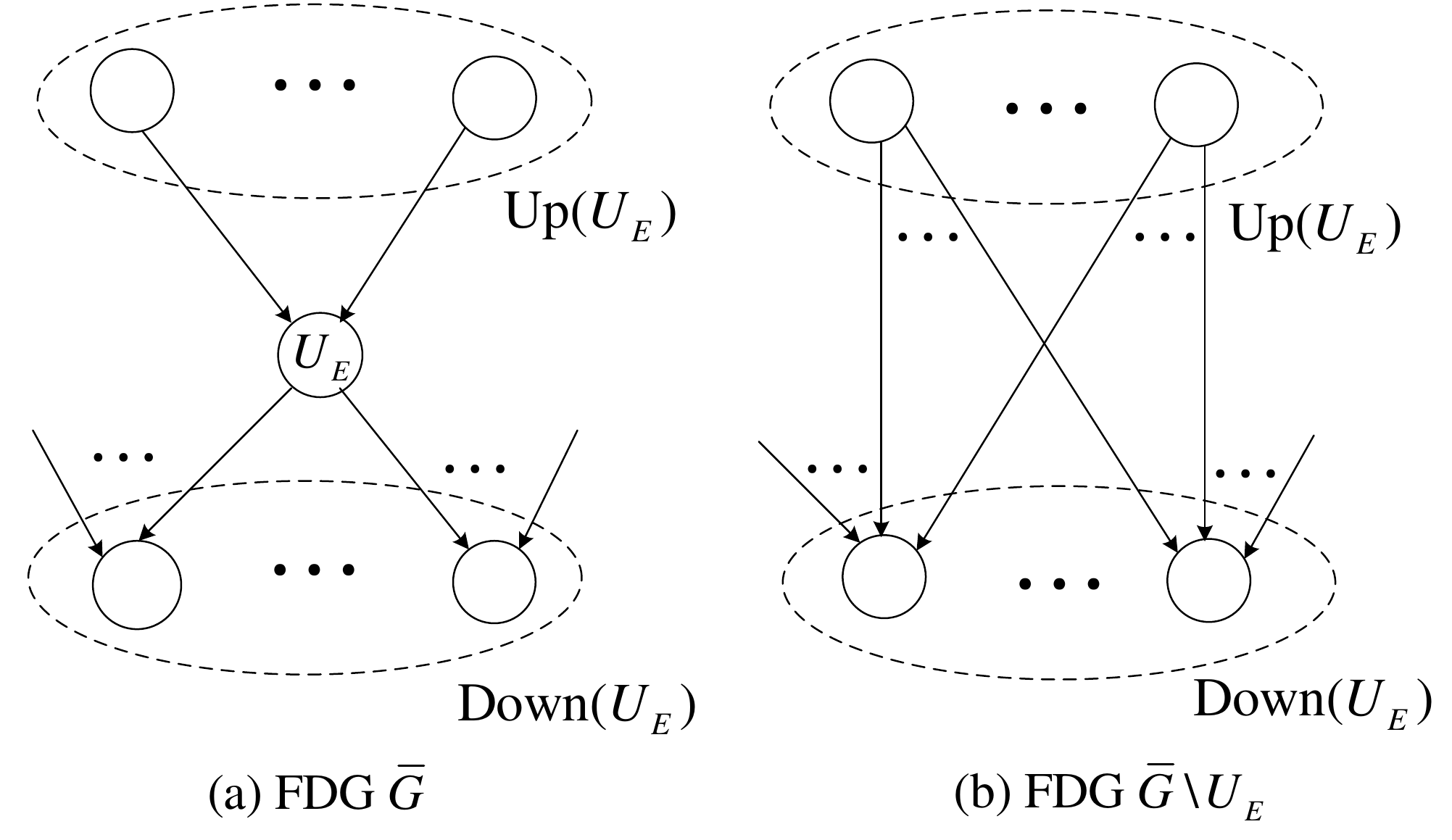}
\caption{Construct new FDG by removing $U_E$}
\label{F:FDG}
\end{figure}

\begin{theorem}\label{reducedNetwork}
Network $\mathcal G$ and $\mathcal G\setminus E$ have the same capacity if the following conditions are satisfied:
\begin{enumerate}
\item{$\mathrm{Tail}(E)\not\in\mathcal S$}
\item{$C_E\geq\sum_{E':E'\in\mathrm{In}(E)}C_{E'}$}
\end{enumerate}
\end{theorem}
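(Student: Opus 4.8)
The plan is to prove that the two networks have \emph{exactly} the same set of achievable rate tuples $\mathbf R=(R_S:S\in\mathcal S)$; since the capacity is defined as the closure of that set, equality of the sets immediately gives equality of the capacities. I would establish the two inclusions separately, each by a code-conversion (``simulation'') argument: for a fixed block length $n$, turn a valid code for one network into a valid code for the other that achieves the same rates, has vanishing decoding error, and respects every capacity constraint in \eqref{edgeCap}; applying the conversion to an achieving sequence of codes with $n\to\infty$ then transfers achievability of $\mathbf R$. Throughout write $V=\mathrm{Tail}(E)$, $U=\mathrm{Head}(E)$, and recall $\mathrm{In}(E)=\mathrm{In}(V)$ and, in $\mathcal G\setminus E$, $\mathrm{In}_{\mathcal G\setminus E}(U)=(\mathrm{In}(U)\setminus\{E\})\cup\mathrm{In}(E)$.

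For the inclusion (achievable in $\mathcal G$) $\subseteq$ (achievable in $\mathcal G\setminus E$), take a code $\{f_{E'},g_T\}$ for $\mathcal G$. In $\mathcal G\setminus E$ the edge $E$ is gone, but node $U$ now observes $U_{\mathrm{In}(E)}$ directly in addition to $U_{\mathrm{In}(U)\setminus\{E\}}$. Because $V\notin\mathcal S$ (condition~1), constraint \eqref{edgeEn} gives $U_E=f_E\!\left(U_{\mathrm{In}(V)}\right)=f_E\!\left(U_{\mathrm{In}(E)}\right)$, which is a function of what $U$ now sees; so in $\mathcal G\setminus E$ node $U$ first recomputes $U_E$ and then applies the original encoders $f_{E''}$, $E''\in\mathrm{Out}(U)$, unchanged (the same trick covers the sub-case $U\in\mathcal T$, where $U$ recomputes $U_E$ and then applies $g_U$). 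Every other node and every sink keeps its original function; node $V$ simply drops its now-absent output on $E$. All edge alphabets other than that of $E$ are unchanged, so \eqref{edgeCap} still holds, and the source rates \eqref{eq:rate}, \eqref{sourceIndep} and the decoding error \eqref{receiverDe} are untouched. Condition~2 is not needed for this direction.

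For the reverse inclusion, take a code for $\mathcal G\setminus E$ and build one for $\mathcal G$ by using the edge $E$ purely as a relay of its own inputs: put $\mathcal U_E^{(n)}\triangleq\prod_{E'\in\mathrm{In}(E)}\mathcal U_{E'}^{(n)}$ and $U_E^{(n)}\triangleq\left(U_{E'}^{(n)}:E'\in\mathrm{In}(E)\right)$, which is a legitimate choice of encoder at $V$ precisely because $V\notin\mathcal S$ and hence $V$'s observation in $\mathcal G$ is exactly $U_{\mathrm{In}(E)}$. Node $U$ then reconstructs $U_{\mathrm{In}(E)}$ from $U_E$, so together with $U_{\mathrm{In}(U)\setminus\{E\}}$ it holds precisely the observation $\left(U_{\mathrm{In}(U)\setminus\{E\}},U_{\mathrm{In}(E)}\right)$ it has in $\mathcal G\setminus E$; it (and all downstream nodes and sinks) can therefore run the $\mathcal G\setminus E$ code verbatim, and all remaining nodes act as in $\mathcal G\setminus E$. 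Rates and decoding are unchanged, and the only capacity constraint needing a check is that of $E$: $\lim_n n^{-1}\log|\mathcal U_E^{(n)}|=\sum_{E'\in\mathrm{In}(E)}\lim_n n^{-1}\log|\mathcal U_{E'}^{(n)}|\le\sum_{E'\in\mathrm{In}(E)}C_{E'}\le C_E$ by condition~2, and since $H(U_E^{(n)})\le\sum_{E'\in\mathrm{In}(E)}H(U_{E'}^{(n)})$ the entropy bound in Definition~\ref{NC} follows from the same finite sum.

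The two hypotheses each enter exactly once, and identifying where is really the conceptual core: condition~1 is what lets $U_E$ be regenerated from --- or, in the reverse direction, be replaced by --- the information on $\mathrm{In}(E)$; it would genuinely fail if $\mathrm{Tail}(E)$ were a source, because then $U_E$ is a function of $Y_S$ by \eqref{sourceEn} rather than of $U_{\mathrm{In}(E)}$, and rerouting is impossible. Condition~2 is exactly the bandwidth budget that allows $E$ to carry the concatenation of all of $\mathrm{In}(E)$ in the reverse conversion. The main technical point to be careful about is that $\mathcal G\setminus E$ is not an ordinary network --- the edges in $\mathrm{In}(E)$ acquire a second head --- so one must confirm that Definition~\ref{NC} still applies to it and that the converted codes are valid in that extended sense; since duplicating an edge symbol to several heads interacts with no encoding or capacity constraint, this is routine, and the $n\to\infty$ entropy/cardinality limits pass through the finite sums above without difficulty.
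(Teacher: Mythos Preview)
Your proposal is correct and follows essentially the same two-sided code-conversion argument as the paper: for $\mathcal G\Rightarrow\mathcal G\setminus E$ you have node $U$ internally recompute $U_E=f_E(U_{\mathrm{In}(E)})$ and then apply the original downstream encoders/decoders, and for $\mathcal G\setminus E\Rightarrow\mathcal G$ you let $E$ carry the concatenation of its inputs and invoke condition~2 to meet the capacity constraint. Your write-up is in fact slightly more explicit than the paper's in checking the entropy/cardinality limits on $E$ and in isolating exactly where each hypothesis is used.
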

\begin{proof}
Assume that an information rate tuple $\mathbf{R}=(R_S:S\in\mathcal S)$ is achievable in the original network $\mathcal G$. According to Definition \ref{NC}, there exists a network code $\textbf{\textsf{C}}$, composed of a set of encoding functions $\{f_E:E\in\mathcal E\}$ and a set of decoding functions $\{g_T:T\in\mathcal T\}$, with which the communication requirement is satisfied. Given $\textbf{\textsf{C}}$, we can easily construct a network code $\textbf{\textsf{C}}'$ for the new network $\mathcal{G}\setminus E$ via the following modification:
\begin{itemize}
\item{excluding the encoding function for the removed edge $E$}
\item{if $\mathrm{Head}(E)\not\in\mathcal T$, for edge $K\in\mathrm{Out}(E)$, let $\mathcal{X}\triangleq\mathrm{In}(K)\setminus E$ be the set of incoming edges to the edge $K$ except $E$. Replacing $f_K$ with a new composite function $f_{K}'=f_{K}(f_E(\prod_{E'\in\mathrm{In}(E)}\mathcal U_{E'}^{(n)}),\prod_{X\in\mathcal X}\mathcal U_{X}^{(n)})$}
\item{if $\mathrm{Head}(E)=T\in\mathcal T$, let $\mathcal{X}=\mathrm{In}(T)\setminus E$. Changing $g_T$ to $g_T'=g_T(f_E(\prod_{E'\in\mathrm{In}(E)}\mathcal U_{E'}^{(n)}),\prod_{X\in\mathcal X}\mathcal U_{X}^{(n)})$ }
\end{itemize}
With above modification, if $\mathrm{Head}(E)\not\in\mathcal T$, $\textbf{\textsf{C}}'=(\textbf{\textsf{C}}\setminus \{f_E,\{f_{K},K\in\mathrm{In}(E)\}\})\cup\{f_{K}'\}$ and if $\mathrm{Head}(E)=T\in\mathcal T$, $\textbf{\textsf{C}}'=(\textbf{\textsf{C}}\setminus \{f_E,g_T\})\cup\{g_T'\}$. Note that the encoding function $f_{K}'$ defines a mapping:$\prod_{X\in\mathcal X}\mathcal U_{X}^{(n)}\cdot\prod_{E'\in\mathrm{In}(E)}\mathcal U_{E'}^{(n)}\rightarrow\mathcal U_{K}^{(n)}$ which produces the same output as $f_K$ in the original network code $\textbf{\textsf{C}}$. Similar arguments hold for the decoding function $g_T'$ if $\mathrm{Head}(E)=T\in\mathcal T$. Therefore, the rate tuple $\mathbf{R}=(R_S:S\in\mathcal S)$ is achievable in $\mathcal{G}\setminus E$ with the network code $\textbf{\textsf{C}}'$.

On the other hand, assume that a rate tuple $\mathbf{R}=(R_S:S\in\mathcal S)$  is achievable in $\mathcal{G}\setminus E$. There exists a network code $\textbf{\textsf{C}}'$, with which the communication requirement can be satisfied in $\mathcal{G}\setminus E$. Based on $\textbf{\textsf{C}}'$, we can construct the network code $\textbf{\textsf{C}}$ for the original network if the conditions stated in this theorem are satisfied.

Condition 1) states $\mathrm{Tail}(E)\not\in\mathcal S$ and Condition 2) states that capacity of edge $E$ is larger than or equal to sum capacity of all edges coming into node $V$. Combined both conditions, we know that edge $E$ is capable of forwarding all received data without any compression. Therefore, we can set the encoding function at edge $E$ as a simple replication, i.e. $\mathcal U_{E}^{(n)}=f_E(\prod_{E'\in\mathrm{In}(E)}\mathcal U_{E'}^{(n)})=\prod_{E'\in\mathrm{In}(E)}\mathcal U_{E'}^{(n)}$. Thus, the network code $\textbf{\textsf{C}}=\textbf{\textsf{C}}'\cup f_E$. Note that this construction does not violate any encoding constraint for downstream edges of $E$ since the same information is available to them for encoding as in the reduced FDG. Therefore, all the variables in the original network are the same as that in the reduced network and thus decoding requirements are satisfied.
\end{proof}

Due to the correspondence of network and its FDG, Theorem \ref{reducedNetwork} can be applied to reduce the size of FDG. Before introducing the corollary of Theorem \ref{reducedNetwork} in FDG, the definition of redundant random variable is made.

\begin{definition}\label{newFDG}
A node $U_E$ (corresponding to the edge random variable $U_E$) in a functional dependence graph $\bar{\mathcal G}=(\bar{\mathcal V}, \bar{\mathcal E})$ is called \emph{redundant} if the capacity region of the reduced FDG $\bar{\mathcal G}\setminus U_E$ is the same as the original FDG $\bar{\mathcal G}$.
\end{definition}

\begin{corollary}\label{singleNode}
A node $U_E\in\big(\bar{\mathcal V}\setminus{Y_{\mathcal S}}\big)$ is redundant if the following conditions are satisfied:
\begin{enumerate}
\item{$Y_{\mathcal S}\cap\mathrm{Up}(U_E)=\emptyset$}
\item{$C_E\geq \sum_{A:U_A\in\mathrm{Up}(U_E)}C_A$}
\end{enumerate}
\end{corollary}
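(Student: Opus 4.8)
The plan is to derive Corollary~\ref{singleNode} directly from Theorem~\ref{reducedNetwork} by exploiting the tight correspondence between a network $\mathcal{G}$ and its FDG $\bar{\mathcal{G}}$. First I would observe that a node $U_E$ in the FDG corresponds to an edge $E=(V,U)$ in $\mathcal{G}$, and that the FDG of the reduced network $\mathcal{G}\setminus E$ is precisely the reduced FDG $\bar{\mathcal{G}}\setminus U_E$ as given in the definition preceding the corollary: removing all edges incident to $U_E$ and adding the ``shortcut'' edges from $\mathrm{Up}(U_E)$ to $\mathrm{Down}(U_E)$ is exactly what happens to the functional dependence relations when the information on $\mathrm{In}(E)$ is made available to $U$ directly. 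Concretely, in $\mathcal{G}\setminus E$ the node $U$ now has incoming edges $(\mathrm{In}(U)\setminus E)\cup \mathrm{In}(E)$, so every variable $U_K$ with $K\in\mathrm{Out}(E)=\mathrm{Out}(U)$ satisfies $H(U_K\mid U_{\mathrm{In}(U)\setminus E},U_{\mathrm{In}(E)})=0$, which is precisely the statement that in $\bar{\mathcal{G}}\setminus U_E$ the parents of $U_K$ are $\mathrm{Up}(U_K)\setminus U_E$ together with $\mathrm{Up}(U_E)$.

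Next I would translate the two hypotheses of the corollary into the two hypotheses of Theorem~\ref{reducedNetwork}. Condition (1), $Y_{\mathcal{S}}\cap \mathrm{Up}(U_E)=\emptyset$, says no source variable is an immediate parent of $U_E$; since $\mathrm{Up}(U_E)=U_{\mathrm{In}(E)}$ and $\mathrm{Up}(Y_S)=U_{\mathrm{In}(T)}$ only for sink-adjacent edges, the absence of a source parent is equivalent to $\mathrm{Tail}(E)\notin\mathcal{S}$ (a source node $S$ would contribute $Y_S\in\mathrm{Up}(U_E)$ for $E\in\mathrm{Out}(S)$). Condition (2), $C_E\geq \sum_{A:U_A\in\mathrm{Up}(U_E)}C_A$, is literally Theorem~\ref{reducedNetwork}'s condition (2) once we note $\mathrm{Up}(U_E)=U_{\mathrm{In}(E)}$, so the sum is over $A\in\mathrm{In}(E)$. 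Also the hypothesis $U_E\in\bar{\mathcal{V}}\setminus Y_{\mathcal{S}}$ guarantees we are removing an edge variable, not a source variable, so ``removing edge $E$'' in $\mathcal{G}$ is well-defined.

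With these identifications in hand, the argument is a short syllogism: the capacity region of an FDG $\bar{\mathcal{G}}$ (as used for the LP/functional-dependence bounds) is determined by the network $\mathcal{G}$ it arises from through the constraints \eqref{eq:rate}--\eqref{edgeCap}; Theorem~\ref{reducedNetwork} shows $\mathcal{G}$ and $\mathcal{G}\setminus E$ have the same capacity under conditions (1)--(2); and $\bar{\mathcal{G}}\setminus U_E$ is the FDG associated with $\mathcal{G}\setminus E$. Hence the capacity region computed from $\bar{\mathcal{G}}\setminus U_E$ equals that computed from $\bar{\mathcal{G}}$, which is exactly the definition (Definition~\ref{newFDG}) of $U_E$ being redundant. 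I would write this out in two or three sentences, citing the reduced-FDG definition for the graph-transformation step and Theorem~\ref{reducedNetwork} for the capacity-equality step.

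The main obstacle, and the step I would be most careful about, is justifying that $\bar{\mathcal{G}}\setminus U_E$ really is the FDG of $\mathcal{G}\setminus E$ rather than merely resembling it --- in particular checking the edge cases where $\mathrm{Head}(E)=T\in\mathcal{T}$ (so $U_E$'s children include a source variable $Y_{b(T)}$ via the decoding cycle) and where $\mathrm{Out}(E)=\emptyset$. One must verify that adding edges $\{(A,B):A\in\mathrm{Up}(U_E),B\in\mathrm{Down}(U_E)\}$ still yields a valid FDG, i.e.\ that \eqref{dependence} holds for every remaining node, which follows because composing the functional dependence $H(U_E\mid U_{\mathrm{Up}(U_E)})=0$ with $H(B\mid U_{\mathrm{Up}(B)})=0$ for each child $B$ preserves zero conditional entropy. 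Everything else is bookkeeping with the $\mathrm{In}/\mathrm{Out}/\mathrm{Up}/\mathrm{Down}$ correspondences already spelled out in Section~\ref{S:background}.
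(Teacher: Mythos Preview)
Your proposal is correct and follows exactly the route the paper takes: the paper simply states that Corollary~\ref{singleNode} follows from Theorem~\ref{reducedNetwork} and Definition~\ref{newFDG}, and your write-up is a careful unpacking of precisely that inference (translating the $\mathrm{Up}/\mathrm{Down}$ conditions on $U_E$ into the $\mathrm{In}/\mathrm{Out}$ conditions on $E$, identifying $\bar{\mathcal G}\setminus U_E$ with the FDG of $\mathcal G\setminus E$, and invoking the definition of redundancy). Your extra care about the edge cases where $\mathrm{Head}(E)\in\mathcal T$ and about verifying \eqref{dependence} for the reduced graph is more than the paper spells out but entirely in the same spirit.
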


Corollary \ref{singleNode} follows from Theorem \ref{reducedNetwork} and Definition \ref{newFDG}.
A simple extension of Corollary \ref{singleNode} from single edge variable to a group of edge variables gives the following corollary.
\begin{corollary}\label{groupNode}
A set of nodes $U_{\mathcal X}\subset\big(\bar{\mathcal V}\setminus Y_{\mathcal S}\big)$ is redundant if the following conditions are satisfied:
\begin{enumerate}
\item{$\mathrm{Up}(U_{E_1})=\mathrm{Up}(U_{E_2}),\ \forall E_1,E_2\in{\mathcal X}$}
\item{$\mathrm{Down}(U_{E_1})=\mathrm{Down}(U_{E_2}),\ \forall E_1,E_2\in{\mathcal X}$}
\item{$Y_{\mathcal S}\cap\mathrm{Up}(U_{\mathcal X})=\emptyset$ and $\sum_{E\in\mathcal X}C_E\geq\sum_{U_A\in\mathrm{Up}(U_{\mathcal X})}C_A$}
\end{enumerate}
\end{corollary}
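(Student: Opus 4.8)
The plan is to lift the argument of Theorem \ref{reducedNetwork} from a single edge to the parallel bundle $\mathcal X$, and then transfer the conclusion back to the FDG via Definition \ref{newFDG} exactly as Corollary \ref{singleNode} does. First I would translate the three hypotheses into the network picture: conditions 1) and 2) say that every edge in $\mathcal X$ is fed by the same set of incoming edges $\mathrm{In}(U_{\mathcal X})$ and feeds the same set of outgoing edges $\mathrm{Down}(U_{\mathcal X})$, so the members of $\mathcal X$ behave as a bundle of parallel links between a common ``tail side'' and a common ``head side''; condition 3) says the tail side carries only edge variables (no source, since $Y_{\mathcal S}\cap\mathrm{Up}(U_{\mathcal X})=\emptyset$ is exactly $\mathrm{Tail}(E)\notin\mathcal S$ for all $E\in\mathcal X$), and that the aggregate capacity $\sum_{E\in\mathcal X}C_E$ dominates the aggregate capacity $\sum_{U_A\in\mathrm{Up}(U_{\mathcal X})}C_A$ of the feeding links. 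The reduced FDG $\bar{\mathcal G}\setminus U_{\mathcal X}$ deletes all of $\mathcal X$ and wires every node of $\mathrm{Up}(U_{\mathcal X})$ directly to every node of $\mathrm{Down}(U_{\mathcal X})$, the bundle analogue of $\mathcal G\setminus E$. I would remark at the outset that one cannot simply peel off the members of $\mathcal X$ one at a time using Corollary \ref{singleNode}, because an individual $C_E$ need not dominate $\sum_A C_A$; the bundle must be treated as a single unit, which is why a fresh argument is needed.

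For the easy direction (any rate tuple achievable on $\mathcal G$ is achievable on the reduced network) I would repeat the composition trick from the proof of Theorem \ref{reducedNetwork}: drop the encoding functions $\{f_E:E\in\mathcal X\}$ and, at each downstream edge $K\in\mathrm{Down}(U_{\mathcal X})$ (or sink decoder, if the head side is a sink), pre-compose with the now-missing maps $f_E$ so that $K$ recomputes from $U_{\mathrm{In}(U_{\mathcal X})}$ exactly what it previously received over $\mathcal X$. The resulting code produces identical edge and source variables, so the rate bounds \eqref{eq:rate}, the capacity bounds \eqref{edgeCap} on the surviving edges, and the decoding constraints all carry over verbatim.

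For the converse direction — which I expect to be the only real obstacle — I would start from a code $\textbf{\textsf{C}}'$ on the reduced network and show the bundle $\mathcal X$ can forward, uncompressed, the entire tuple $(U_A:U_A\in\mathrm{Up}(U_{\mathcal X}))$. Choosing $\textbf{\textsf{C}}'$ so that $\lim_n n^{-1}\log|\mathcal U_E^{(n)}|=C_E$ for $E\in\mathcal X$ while $\lim_n n^{-1}\log|\mathcal U_A^{(n)}|\le C_A$ for the feeding edges, condition 3) gives $\sum_{E\in\mathcal X}\lim_n n^{-1}\log|\mathcal U_E^{(n)}|\ge\sum_A\lim_n n^{-1}\log|\mathcal U_A^{(n)}|$, so for $n$ large there is an injection of $\prod_{U_A\in\mathrm{Up}(U_{\mathcal X})}\mathcal U_A^{(n)}$ into $\prod_{E\in\mathcal X}\mathcal U_E^{(n)}$. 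Using its coordinate projections as the new encoders $f_E$, $E\in\mathcal X$ — legitimate precisely because condition 3) rules out a source at the tail side, so the inputs to these $f_E$ are exactly the parent edge variables — makes the full tuple $(U_A:U_A\in\mathrm{Up}(U_{\mathcal X}))$ recoverable at every node of $\mathrm{Down}(U_{\mathcal X})$, i.e.\ exactly the information those nodes already had in the reduced network. Hence every downstream encoder and every sink decoder of $\mathcal G$ reproduces its former output, the edge-capacity bounds on $\mathcal X$ hold by construction, and the source rates are untouched, so $\mathbf R$ is achievable on $\mathcal G$. Combining both directions gives equality of the network capacities, and the network--FDG correspondence together with Definition \ref{newFDG} then yields that $U_{\mathcal X}$ is redundant.
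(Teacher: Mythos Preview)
Your proposal is correct and follows essentially the same route as the paper: the paper's entire proof is the one-line remark that Corollary~\ref{groupNode} follows by treating the set $U_{\mathcal X}$ as a single ``super-variable'' and invoking the single-edge result, which is exactly your strategy of handling the bundle as one unit rather than peeling off edges individually. Your write-up simply makes explicit the injection/splitting step that the super-variable abstraction hides; the only cosmetic slip is that when you write ``Choosing $\textbf{\textsf{C}}'$ so that $\lim_n n^{-1}\log|\mathcal U_E^{(n)}|=C_E$ for $E\in\mathcal X$'' you should say you are \emph{introducing} alphabets $\mathcal U_E^{(n)}$ for the restored edges, since $\textbf{\textsf{C}}'$ itself has no such edges.
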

The proof of Corollary \ref{groupNode} follows 
by treating the set of variables $U_{\mathcal X}$ as a single ``super-variable''.
Note that the redundant variables identified by Corollary \ref{singleNode} and Corollary \ref{groupNode} correspond to the edges in original network where network coding is not necessary since they have sufficient capacity to forward all received information to their children.

In practise, most networks use packet-based transmissions. Therefore, we can assume unit-edge capacity. The edges that have large capacity (convey multiple packets) can be represented by multiple parallel edges of unit capacity. Based on this assumption, Corollary 1 and Corollary 2 can be simplified as follows.
\begin{corollary}\label{unitSingle}
For $U_E\in\big(\bar{\mathcal V}\setminus Y_{\mathcal S}\big)$, it is redundant if $|\mathrm{Up}(U_E)|=1$ and $\mathrm{Up}(U_E)\cap Y_{\mathcal S}=\emptyset$
\end{corollary}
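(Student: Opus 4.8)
The plan is to derive Corollary~\ref{unitSingle} as a direct specialization of Corollary~\ref{singleNode} under the packet-based (unit edge capacity) convention introduced just above the statement. Recall that under this convention every link carries a single packet, so $C_E = 1$ for all $E \in \mathcal E$, while a high-capacity link is modeled by several parallel unit-capacity links. Thus the content of the proof is simply to check that the two numbered hypotheses of Corollary~\ref{unitSingle} force the two numbered hypotheses of Corollary~\ref{singleNode} to hold, and then quote that corollary.

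First I would match Condition~1. The hypothesis $\mathrm{Up}(U_E) \cap Y_{\mathcal S} = \emptyset$ is literally Condition~1 of Corollary~\ref{singleNode}, namely $Y_{\mathcal S} \cap \mathrm{Up}(U_E) = \emptyset$, so nothing is needed here. Next I would match Condition~2. Because $\mathrm{Up}(U_E) \cap Y_{\mathcal S} = \emptyset$, every parent of $U_E$ is an edge variable $U_A$ with $A \in \mathcal E$; since $|\mathrm{Up}(U_E)| = 1$ there is exactly one such $A$, and with unit capacities $\sum_{A : U_A \in \mathrm{Up}(U_E)} C_A = C_A = 1 = C_E$, so $C_E \ge \sum_{A : U_A \in \mathrm{Up}(U_E)} C_A$ holds, indeed with equality. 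Having established both hypotheses, Corollary~\ref{singleNode} applies verbatim and gives that the capacity region of $\bar{\mathcal G} \setminus U_E$ equals that of $\bar{\mathcal G}$, i.e.\ $U_E$ is redundant, which is the claim.

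There is essentially no obstacle in this argument beyond bookkeeping. The only point I would make explicit is the consistency of the unit-capacity modeling: replacing a capacity-$c$ link by $c$ parallel unit-capacity links leaves the network's capacity region unchanged, so working under the unit-capacity convention is without loss of generality, and Corollary~\ref{unitSingle} is exactly the instantiation of Corollary~\ref{singleNode} in that setting. (If one wanted to avoid even this remark, the corollary can equivalently be read as: whenever $C_E = 1$, $|\mathrm{Up}(U_E)| = 1$, and $\mathrm{Up}(U_E)\cap Y_{\mathcal S}=\emptyset$, then $U_E$ is redundant, which is immediate from Corollary~\ref{singleNode}.)
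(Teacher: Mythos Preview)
Your proposal is correct and matches the paper's approach: the paper presents Corollary~\ref{unitSingle} explicitly as the simplification of Corollary~\ref{singleNode} under the unit-capacity assumption, without giving a separate proof. Your verification that $|\mathrm{Up}(U_E)|=1$ together with unit capacities forces $C_E \ge \sum_{A:U_A\in\mathrm{Up}(U_E)} C_A$ is exactly the intended (and only) step.
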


\begin{corollary}\label{unitGroup}
Let $U_{\mathcal X}\subset\big(\bar{\mathcal V}\setminus Y_{\mathcal S}\big)$,  they are redundant if the following conditions are satisfied.
\begin{enumerate}
\item{$\mathrm{Up}(U_{E_1})=\mathrm{Up}(U_{E_2}),\ \forall E_1,E_2\in{\mathcal X}$}
\item{$\mathrm{Down}(U_{E_1})=\mathrm{Down}(U_{E_2}),\ \forall E_1,E_2\in{\mathcal X}$}
\item{$\mathrm{Up}(U_{\mathcal X})\cap Y_{\mathcal S}=\emptyset$ and $|\mathrm{Up}(U_{\mathcal X})|\leq |U_{\mathcal X}|$}
\end{enumerate}
\end{corollary}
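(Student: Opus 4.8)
The plan is to derive Corollary \ref{unitGroup} as the specialization of Corollary \ref{groupNode} to the unit-capacity regime described in the preceding paragraph. Since every link is assumed to carry a single packet, we set $C_E=1$ for all $E\in\mathcal E$ (a link of capacity $c$ having been replaced beforehand by $c$ parallel unit-capacity links, an operation that does not change the network's capacity region). With this substitution, conditions 1 and 2 of Corollary \ref{unitGroup} are literally conditions 1 and 2 of Corollary \ref{groupNode}, and the requirement $\mathrm{Up}(U_{\mathcal X})\cap Y_{\mathcal S}=\emptyset$ coincides with the first half of its condition 3.

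It then remains only to reconcile the two capacity inequalities. By condition 1 of Corollary \ref{groupNode} the nodes $U_E$, $E\in\mathcal X$, share a common parent set, so $\mathrm{Up}(U_{\mathcal X})=\mathrm{Up}(U_{E_1})$ for any $E_1\in\mathcal X$, and of course $|U_{\mathcal X}|=|\mathcal X|$. Plugging $C_E=1$ and $C_A=1$ into $\sum_{E\in\mathcal X}C_E\geq\sum_{U_A\in\mathrm{Up}(U_{\mathcal X})}C_A$ turns it into $|\mathcal X|\geq|\mathrm{Up}(U_{\mathcal X})|$, i.e. $|\mathrm{Up}(U_{\mathcal X})|\leq|U_{\mathcal X}|$, which is exactly the second half of condition 3 of Corollary \ref{unitGroup}. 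Hence every hypothesis of Corollary \ref{groupNode} is met, and the redundancy of $U_{\mathcal X}$ --- that $\bar{\mathcal G}\setminus U_{\mathcal X}$ has the same capacity region as $\bar{\mathcal G}$ --- follows at once.

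Since the argument is pure bookkeeping, there is no genuine obstacle; the only points worth making explicit are (i) that rewriting a capacity-$c$ link as $c$ parallel unit links is capacity-preserving, which legitimises restricting attention to $C_E=1$, and (ii) that the group formulation is genuinely needed here rather than iterating the single-node Corollary \ref{unitSingle}: when $|\mathrm{Up}(U_{\mathcal X})|=|U_{\mathcal X}|\geq 2$ each individual $U_E$ has more than one parent, so no node of the bundle qualifies for removal on its own, yet the whole bundle is redundant because, exactly as in the proof of Corollary \ref{groupNode}, it behaves as a single super-variable whose common parent set has cardinality at most that of the bundle.
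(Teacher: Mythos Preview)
Your proposal is correct and follows exactly the route the paper intends: Corollary~\ref{unitGroup} is obtained from Corollary~\ref{groupNode} by specializing to unit edge capacities, whereupon the sum-capacity inequality collapses to the cardinality bound $|\mathrm{Up}(U_{\mathcal X})|\leq|U_{\mathcal X}|$. Your additional remarks on why the group version is not reducible to iterating Corollary~\ref{unitSingle} are accurate and go a bit beyond what the paper spells out.
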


\begin{example}
An example of reduced FDG is shown in Fig.\ref{F:S2Unicast}. The network in Fig.\ref{F:S2Unicast}(a) is constructed in \cite{Kamath11} and each edge in the network has unit capacity. Fig.\ref{F:S2Unicast}(c) gives the FDG constructed based on Definition \ref{FDG}. This FDG can be reduced to the one shown in Fig.\ref{F:S2Unicast}(b) based on Corollary \ref{unitSingle}.
\end{example}

\begin{figure}[htb]
\centering
\includegraphics[scale=0.47]{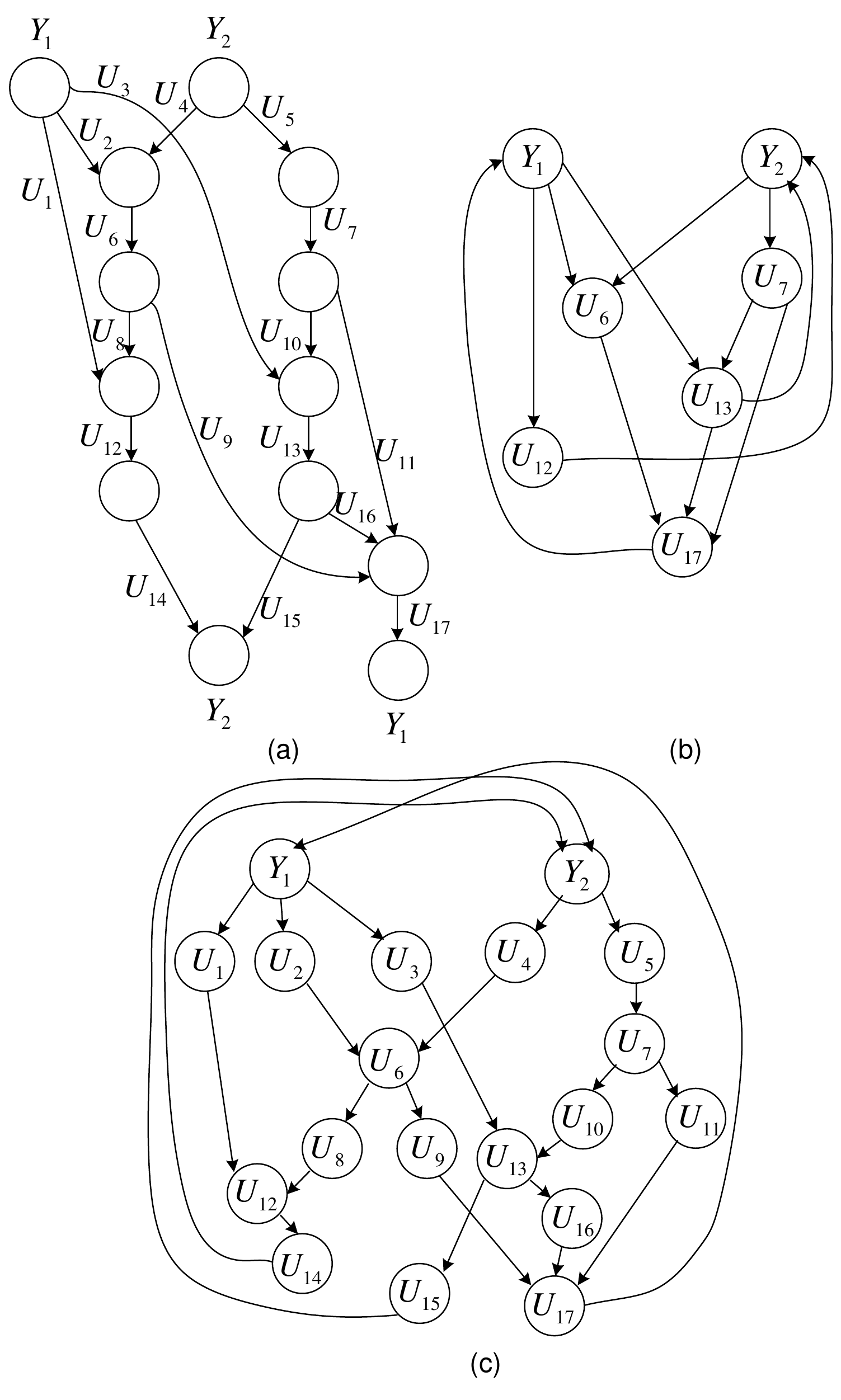}
\caption{(a)A network in \cite{Kamath11} (b)The Reduced FDG (proposed) (c)The conventional FDG}
\label{F:S2Unicast}
\end{figure}

\subsection{Further Reduction of FDG}
It has been proved in \cite{Dougherty05} that linear network coding is not sufficient to achieve the capacity for general networks. However, in most practical scenarios, linear network coding is preferred due to its simplicity. With linear network coding, the data carried by an edge $E$ is a linear combination of input data available at $\mathrm{Head}(E)$. Without loss of generality, we may assume that each edge has unit capacity and the encoding function at edge $E$ is described by\cite{Koetter03}:
\begin{equation*}\label{LinearNC}
f_E: U_E=\sum_{E':E'\in\mathrm{In}(E)}\varepsilon_{E,E'}U_{E'}
\end{equation*}
where the coefficients $\varepsilon_{E,E'}$ are elements chosen from designed alphabet and $H(U_E)\leq 1,\forall E\in \mathcal E$.

\begin{theorem}\label{LinearSingleNode}
Assume each edge in the network has unit capacity, then the network $\mathcal G$ and $\mathcal G\setminus E$ (refer to Fig.\ref{F:Network}) have the same \emph{linear network coding capacity} if one of the following conditions is satisfied:
\begin{enumerate}
\item{$\mathrm{Tail}(E)\not\in\mathcal S$ and $|\mathrm{In}(E)|=1$}
\item{$\mathrm{Head}(E)\not\in\mathcal T$ and $|\mathrm{Out}(E)|=1$}
\end{enumerate}
\end{theorem}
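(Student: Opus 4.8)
The plan is to reuse the two-way reduction argument from the proof of Theorem~\ref{reducedNetwork}, keeping every encoder and decoder linear; conditions~1) and~2) are exactly what let a single inserted linear map on $E$ stay consistent with all downstream encoders without requiring any capacity slack on $E$. One direction is unconditional: a scalar linear code for $\mathcal G$ induces one for $\mathcal G\setminus E$ of the same rate by folding $f_E$ into the unique downstream encoder (or decoder) that reads $E$, exactly as in the proof of Theorem~\ref{reducedNetwork}. That construction uses neither condition, and since $f_E$ is linear the folded maps $f_K'$ (resp.\ $g_T'$) are compositions of linear maps and hence linear; the symbol carried by every surviving edge is unchanged, so the source-rate inequalities and the decoding requirements of Definition~\ref{NC} transfer verbatim. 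It remains to go the other way: from a scalar linear code $\textbf{\textsf{C}}'$ for $\mathcal G\setminus E$, construct a scalar linear code $\textbf{\textsf{C}}$ for $\mathcal G$ reproducing every surviving edge symbol.

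Under condition~1), write $\mathrm{In}(E)=\{E'\}$. In $\mathcal G\setminus E$ the symbol $U_{E'}$ is delivered directly to $\mathrm{Head}(E)$, and because $\mathrm{Tail}(E)\notin\mathcal S$ the edge $E$ carries an edge symbol rather than a raw source, so in $\mathcal G$ we may let $E$ simply replicate it: $f_E: U_E=U_{E'}$, which is linear and satisfies $H(U_E)=H(U_{E'})\le1$. Keeping every other encoder and decoder of $\textbf{\textsf{C}}'$, node $\mathrm{Head}(E)$ and all of its out-edges now see exactly the inputs they saw under $\textbf{\textsf{C}}'$, so every edge and source variable coincides and $\mathbf R$ is achievable in $\mathcal G$.

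Under condition~2), write $\mathrm{Out}(E)=\{K\}$, which is well defined and nonempty because $\mathrm{Head}(E)\notin\mathcal T$. Using that $\mathrm{In}(E)$ feeds $\mathrm{Head}(E)$ directly in $\mathcal G\setminus E$, the encoder $f_K$ of $\textbf{\textsf{C}}'$ has the form
\begin{equation*}
U_K=\sum_{X\in\mathrm{In}(K)\setminus E}\varepsilon_{K,X}\,U_{X}+\sum_{E'\in\mathrm{In}(E)}\varepsilon_{K,E'}\,U_{E'}.
\end{equation*}
In $\mathcal G$ we split this sum by setting $f_E: U_E=\sum_{E'\in\mathrm{In}(E)}\varepsilon_{K,E'}\,U_{E'}$ (a single field element, so $H(U_E)\le1$) and $f_K: U_K=\sum_{X\in\mathrm{In}(K)\setminus E}\varepsilon_{K,X}\,U_{X}+U_E$. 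Because $K$ is the only out-edge of $E$, this assignment of $f_E$ cannot clash with any other encoder; keeping all remaining maps of $\textbf{\textsf{C}}'$, every surviving edge symbol is reproduced, so the decoders of $\textbf{\textsf{C}}'$ still recover the demanded sources in $\mathcal G$.

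The main obstacle I anticipate is the bookkeeping around the definition of $\mathcal G\setminus E$: in each case one must check that, after re-inserting $E$ with the chosen linear $f_E$, the collection of symbols available at $\mathrm{Head}(E)$ (and at $\mathrm{Tail}(E)$) is precisely the one available in $\mathcal G\setminus E$, that the unit-capacity constraint $H(U_E)\le1$ is never violated, and that a source edge---which can arise only under condition~2)---is handled by letting the ``input'' relayed by $E$ be the source variable $Y_S$ itself. Once these checks are in place, the equality of all edge and source variables makes the rate inequalities and the decoding-error condition of Definition~\ref{NC} transfer in both directions, yielding equality of the linear network coding capacities of $\mathcal G$ and $\mathcal G\setminus E$.
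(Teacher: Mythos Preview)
Your proposal is correct and follows essentially the same two-direction argument as the paper: fold $f_E$ into downstream encoders to go from $\mathcal G$ to $\mathcal G\setminus E$, and for the converse either forward the single input on $E$ (condition~1) or split the unique downstream encoder $f_K$ across $E$ and $K$ (condition~2). Your treatment is in fact more careful than the paper's about linearity of the folded maps, the unit-capacity check $H(U_E)\le 1$, and the source-edge corner case under condition~2).
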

\begin{proof}
As shown in the proof of Theorem \ref{singleNode}, if a rate tuple $\mathbf{R}=(R_S:S\in\mathcal S)$ is achievable in the $\mathcal G$, it is always achievable in $\mathcal G\setminus E$ with the same linear network code after simple modification.

On the other hand, if a rate tuple $\mathbf{R}$ is achieved in $\mathcal G\setminus E$ with a linear network code $\textbf{\textsf{C}}'$, we will show that a linear network code $\textbf{\textsf{C}}$ can be constructed for the original network $\mathcal G$ with given conditions.

For $K\in\mathrm{Out}(E)$, let $\mathcal X=\mathrm{In}(K)\setminus E$. Since $\textbf{\textsf{C}}'$ is a linear network code, the encoding function for edge $K$ is of the form: $f_K:U_K=\sum_{E:E\in\mathrm{In}(K)}\varepsilon_{E,K}U_E=\sum_{E'\in\mathrm{In}(E)}\varepsilon_{E',K}U_{E'}+\sum_{X\in\mathcal X}\varepsilon_{X,K}U_X$.  If condition 1) is satisfied, $\textbf{\textsf{C}}$ is constructed from $\textbf{\textsf{C}}'$ by letting edge $E$ perform simple forwarding, i.e.$f_E:U_E=U_{\mathrm{In}(E)}$. If condition 2) is satisfied, the encoding for edge $E$ can be set to $f_E:U_E=\sum_{E'\in\mathrm{In}(E)}\varepsilon_{E',K}U_{E'}$. For the edge $K=\mathrm{Out}(E)$, the encoding function is changed to $f_K:U_K=U_E+\sum_{X\in\mathcal X}\varepsilon_{X,K}U_X$. For either case, the constructed code $\textbf{\textsf{C}}$ achieves the same rate tuple $\mathbf{R}=(R_S:S\in\mathcal S)$ since all edges carry the same information as that in $\mathcal G\setminus E$.
\end{proof}

Applying Theorem \ref{LinearSingleNode} in FDG results further reduction described by the following corollary.
\begin{corollary}\label{FDGLinearSingle}
When linear network coding capacity is interested, a node $U_E\in\big(\bar{\mathcal V}\setminus Y_{\mathcal S}\big)$ is redundant if one of the following conditions is satisfied:
\begin{enumerate}
\item{$|\mathrm{Up}(U_E)|=1$ and $Y_{\mathcal S}\cap\mathrm{Up}(U_E)=\emptyset$}
\item{$|\mathrm{Down}(U_E)|=1$ and $Y_{\mathcal S}\cap\mathrm{Down}(U_E)=\emptyset$}
\end{enumerate}
\end{corollary}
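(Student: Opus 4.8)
The plan is to obtain this corollary from Theorem~\ref{LinearSingleNode} by transporting the statement through the network--FDG correspondence, exactly as Corollary~\ref{singleNode} was obtained from Theorem~\ref{reducedNetwork}. The first step is to record the dictionary relating the parent/child sets of a node in $\bar{\mathcal G}$ to the edge neighbourhoods in $\mathcal G$: for $E=(V,U)\in\mathcal E$ we have $\mathrm{Up}(U_E)=U_{\mathrm{In}(E)}$ when $V\notin\mathcal S$, whereas $\mathrm{Up}(U_E)$ is a single source node $Y_S$ with $a(S)=V$ when $V\in\mathcal S$ (since source nodes have no in-edges); dually $\mathrm{Down}(U_E)=U_{\mathrm{Out}(E)}$ when $U\notin\mathcal T$, whereas $\mathrm{Down}(U_E)=\{Y_S:S\in b(U)\}$ when $U\in\mathcal T$ (since sink nodes have no out-edges). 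Hence hypothesis~1) of the corollary, ``$|\mathrm{Up}(U_E)|=1$ and $Y_{\mathcal S}\cap\mathrm{Up}(U_E)=\emptyset$'', holds precisely when $\mathrm{Tail}(E)\notin\mathcal S$ and $|\mathrm{In}(E)|=1$, i.e.\ condition~1) of Theorem~\ref{LinearSingleNode}; and hypothesis~2), ``$|\mathrm{Down}(U_E)|=1$ and $Y_{\mathcal S}\cap\mathrm{Down}(U_E)=\emptyset$'', holds precisely when $\mathrm{Head}(E)\notin\mathcal T$ and $|\mathrm{Out}(E)|=1$, i.e.\ condition~2) of that theorem.

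The second step is to verify that the FDG of the reduced network $\mathcal G\setminus E$ is exactly the reduced FDG $\bar{\mathcal G}\setminus U_E$. In $\mathcal G\setminus E$ the node $U_E$ disappears and each edge $K\in\mathrm{Out}(E)$ draws its inputs from $(\mathrm{In}(K)\setminus\{E\})\cup\mathrm{In}(E)$, while the neighbourhoods of all other edges are unchanged; translated to the random variables this deletes every arc incident to $U_E$ and inserts an arc from each $A\in\mathrm{Up}(U_E)$ to each $B\in\mathrm{Down}(U_E)$, which is exactly the arc set $\bar{\mathcal E}'$, and the functional dependence condition~\eqref{dependence} is inherited at every remaining node. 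With this identification, Theorem~\ref{LinearSingleNode} under either hypothesis gives that $\mathcal G$ and $\mathcal G\setminus E$ have the same linear network coding capacity; since the linear network coding capacity region attached to an FDG (with unit edge capacities) depends on the FDG alone, the capacity region of $\bar{\mathcal G}\setminus U_E$ equals that of $\bar{\mathcal G}$, so $U_E$ is redundant in the sense of Definition~\ref{newFDG}.

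The step I expect to be most delicate is the verification in the second paragraph that the network operation $\mathcal G\mapsto\mathcal G\setminus E$ corresponds exactly to the graph operation $\bar{\mathcal G}\mapsto\bar{\mathcal G}\setminus U_E$, including the degenerate configurations where $E$ is incident to a source or sink node, together with the accompanying claim that no node outside $\mathrm{Down}(U_E)$ has its parent set altered. The hypotheses of the corollary are chosen precisely to exclude the source-adjacent cases, so once the dictionary above is in place the remaining bookkeeping is routine; nevertheless it is the place where an oversight could slip in.
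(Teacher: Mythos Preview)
Your proposal is correct and follows exactly the route the paper intends: the paper simply remarks that the corollary is obtained by ``applying Theorem~\ref{LinearSingleNode} in FDG,'' and you have carefully spelled out the network--FDG dictionary that makes this application work, together with the observation that $\bar{\mathcal G}\setminus U_E$ is the FDG of $\mathcal G\setminus E$. The only difference is the level of detail---the paper leaves all of your bookkeeping implicit.
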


Note that the extension to group nodes is straightforward and thus omitted here due to space limitation.
\begin{example}
Consider the network shown in Fig.\ref{F:2unicast}(a) which appeared in \cite{Song11} and has unit edge capacity. The corresponding FDG of this network is shown in Fig.\ref{F:2unicast}(b).  By Corollary \ref{singleNode}, it is reduced to \ref{F:2unicast}(c) and it can be further reduced to the one in Fig.\ref{F:2unicast}(d) according to Corollary \ref{FDGLinearSingle} if only linear network coding capacity is interested. With the reduced FDG, it is easy to show that $H(Y_1,Y_2|U_8)=0$. Thus the sum capacity that can be achieved with linear network coding is bounded by: $R_1+R_2\leq H(U_8)\leq 1$, which is the same as the capacity achieved by routing.
\end{example}

\begin{figure}[htb]
\centering
\includegraphics[scale=0.42]{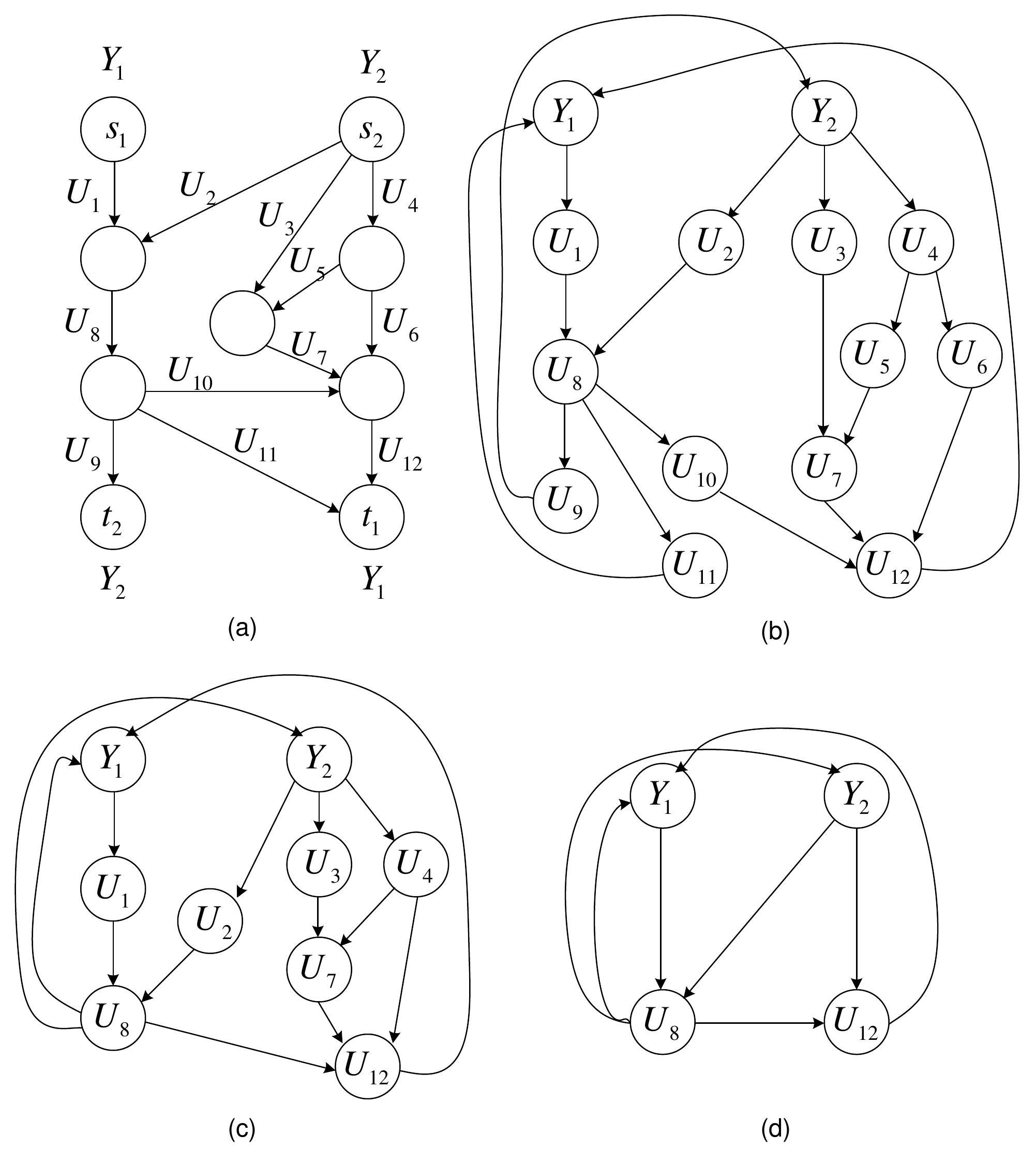}
\caption{(a)A network taken from \cite{Song11} (b)The conventional FDG (c)The reduced FDG (proposed) (d)The further reduced FDG}
\label{F:2unicast}
\end{figure}

\section{Applications of Reduced FDG}
Although the concept of reduced FDG is very simple and the procedures to obtain reduced FDG are straightforward, the range of its application is wide and two of them are discussed in this section.
\subsection{Complexity Reduction for Computing LP Bound}\label{S:LPbound}
The Linear Programming (LP) bound is the set of all rate tuples satisfying \eqref{eq:rate}-\eqref{edgeCap} together with the basic inequalities. The basic inequalities are non-negativity of entropy, conditional entropy and conditional mutual information. The set of the elemental basic inequalities
\begin{align}
H(A|{\{Y_{\mathcal S},U_{\mathcal E}\}\setminus{A}})\geq{0},&\: A\in\{Y_{\mathcal S},U_{\mathcal E}\}\label{Shannon1}\\
I(A;B|\mathcal C)\geq{0},&\: A\neq{B}\neq \emptyset, A,B \in \{Y_{\mathcal S},U_{\mathcal E}\},\nonumber\\
 &\: \mathcal C \subseteq \{Y_{\mathcal S},U_{\mathcal E}\}\setminus \{A,B\}\label{Shannon2}
\end{align}
represents (implies) all Shannon-type inequalities for the random variables $Y_{\mathcal S},U_{\mathcal E}$ and is minimal \cite{Yeung08}. Note that the constraints \eqref{eq:rate}-\eqref{Shannon2} are linear and hence the LP bound can be computed by solving the linear program
\begin{equation}\label{eq:LPBound}
    \text{max }\sum_{S\in \mathcal S} w_S H(Y_S) \text{ subject to }\eqref{sourceIndep}-\eqref{edgeCap},\eqref{Shannon1}-\eqref{Shannon2}
\end{equation}
where $w_S$ is any non-negative constant for source $S$ called weight coefficient.

For every set of chosen weight, the dimension of this optimization problem is $2^N-1$, where $N=|\mathcal E|+|\mathcal S|$ and the number of constraints is $N+{{N}\choose{2}}2^{N-2}+1+2|\mathcal E|+|\mathcal T|$, including $N+{{N}\choose{2}}2^{N-2}$ elemental information inequalities, 1 equality representing source independence, $|\mathcal E|$ equalities for encoding requirements, another $|\mathcal E|$ inequalities capturing the edge capacity constraints and $|\mathcal T|$ equalities stating the decoding requirements.

Note that the dimension of this optimization problem grows exponentially with the order of its FDG, $N$. Therefore, any reduction of FDG size results in exponential reduction of the problem dimension. The reduction in the number of constraints is more significant since it grows even faster with the size of FDG.

\begin{example}
(Butterfly Network):\textnormal{Consider the Butterfly network shown in Fig.\ref{Butterfly}(a). If the original FDG shown in Fig.\ref{Butterfly}(b) is used. The number of variables involved is $N=9$ and the dimension of the linear program is $511$. A matrix representing the constraint set is of size $4634\times 511$. With the reduced FDG shown in Fig.\ref{Butterfly}(c), $N$ reduces to $5$, thus the problem dimension and the constraint size reduces to $31$ and $91\times 31$ respectively. The overall complexity reduction exceeds $99\%$. Combining the reduced FDG with other complexity reduction techniques shown in \cite{Thakor11}, the dimension of the LP bound for Butterfly network can be further reduced to $8$.}
\end{example}

\begin{figure}[htb]
\centering
\includegraphics[scale=0.44]{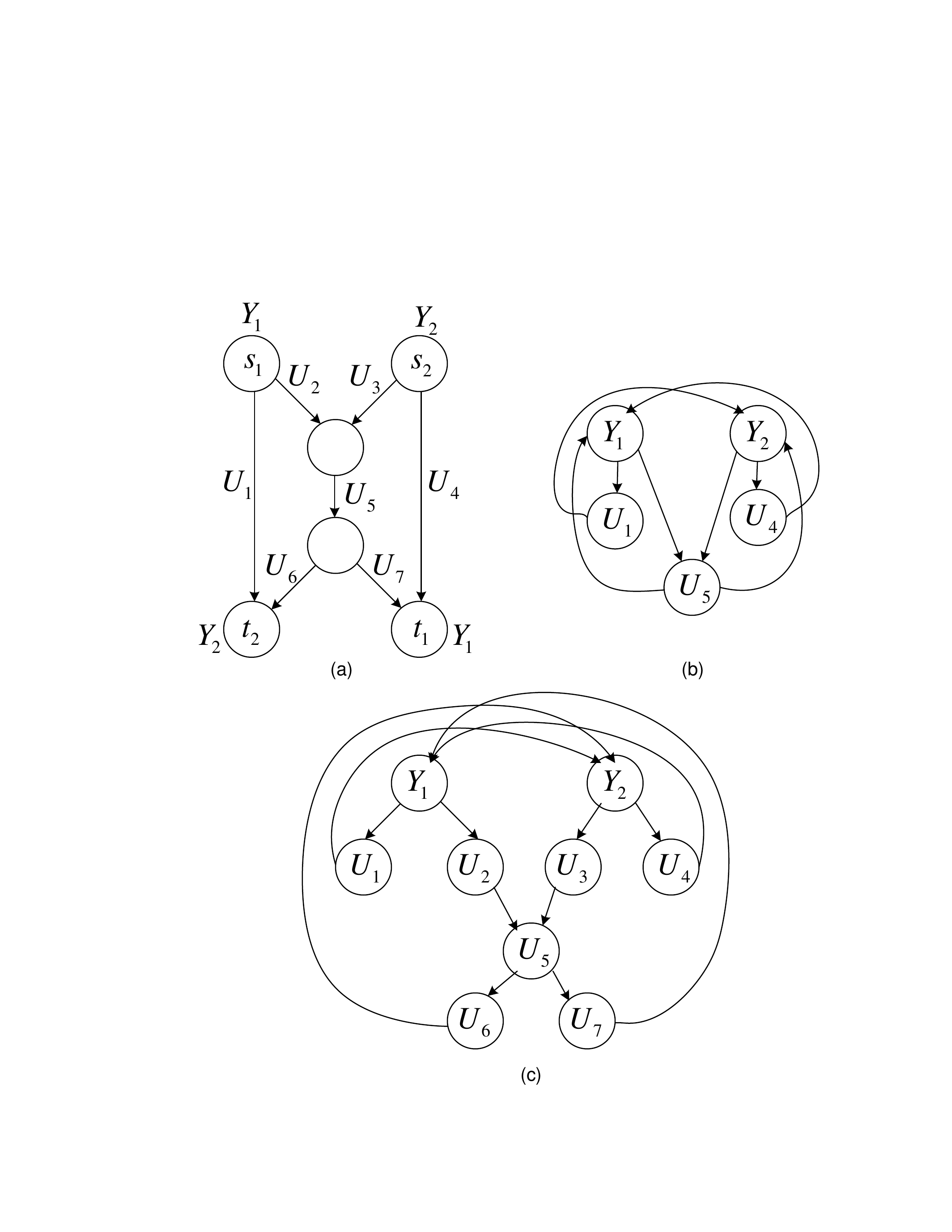}
\caption{(a)Butterfly Network (b)The reduced FDG (c)The conventional FDG}
\label{Butterfly}
\end{figure}

\subsection{Complexity Reduction in Algebraic Formulation of the Scalar Network Coding}\label{S:ImprovedFDG}
\begin{figure*}
\begin{equation}\label{eq:long}
\begin{aligned}
M=A(I-F)^{-1}B=\left[ \begin{array}{ccc}
\alpha_2\beta_2\gamma_5 & \alpha_2\beta_1\gamma_3+\alpha_2\beta_2\gamma_4 & \alpha_1\gamma_1+\alpha_2\beta_1\gamma_2\\
\alpha_4\gamma_6+\alpha_3\beta_2\gamma_5 & (\alpha_3\beta_1+\alpha_4\beta_3)\gamma_3+\alpha_3\beta_2\gamma_4 & (\alpha_3\beta_1+\alpha_4\beta_3)\gamma_2\\
\alpha_5\gamma_6+\alpha_6\gamma_5 & \alpha_5\beta_3\gamma_3+\alpha_6\gamma_4 & \alpha_5\beta_3\gamma_2\\
\end{array}
\right]
\end{aligned}
\end{equation}
\hrulefill
\end{figure*}
The directed line graph is introduced in \cite{Koetter03} to construct an algebraic formulation of the scalar network coding. For a network $\mathcal G=(\mathcal V,\mathcal E)$, let its line graph be denoted by $\tilde{\mathcal G}=(\tilde{\mathcal V},\tilde{\mathcal E})$ and its FDG be denoted by $\bar{\mathcal G}=(\bar{\mathcal V},\bar{\mathcal E})$. Each $V\in \tilde{\mathcal V}$ corresponds to an edge $E\in \mathcal E$ and $(V_1,V_2)\in{\tilde{\mathcal E}}$ if and only if $\mathrm{Head}(V_1)=\mathrm{Tail}(V_2)$ in $\mathcal G$. Compare it with the definition of FDG, it is not difficult to conclude that $\tilde{\mathcal G}$ is a subgraph of $ \bar{\mathcal G}$. Therefore, $\bar{\mathcal G}$ can be used to construct the algebraic formulation as well.

Following the similar notation used in \cite{Koetter03}, let $A$ be the matrix that maps the source to its out-edges, $F$ be adjacent matrix of $\tilde{\mathcal G}$ and $B$ be the matrix that maps in-edges of the receiver to decoded symbols. The transition matrix of the network is formulated as $M=A(I-F)^{-1}B$ \cite{Koetter03}. The total number of variables involved in this formulation is $|\bar{\mathcal E}|$ and adjacent matrix $F$ is of the size $|\bar{\mathcal V}|\times{|\bar{\mathcal V}|}$. Let $\Delta_V$ and $\Delta_E$ denote the amount of nodes and edges removed from the original FDG respectively. Therefore, with the reduced FDG, the variables involved in the formulation is reduced by $\Delta_E$ and the reduction in complexity for computing the transition matrix is  proportional to $\Delta_V$.
\begin{example}
Fig.\ref{binaryNetwork}(a) shows a network taken from \cite{Dougherty05} that is solvable only over fields with characteristic 2 and Fig.\ref{binaryNetwork}(b) shows the corresponding reduced FDG. By labeling each edge in the reduced FDG, we can formulate the scalar linear network coding as:

\small{
\begin{equation*}
A= \left[ \begin{array}{ccccc}
\alpha_1 & \alpha_2 & 0 & 0 & 0\\
0 & \alpha_3 & \alpha_4 & 0 & 0\\
0 & 0 & \alpha_5 & 0 & \alpha_6
\end{array} \right]
\end{equation*}
\begin{equation*}
B^T= \left[ \begin{array}{ccccc}
0 & 0 & \gamma_6 & 0 & \gamma_5\\
0 & 0 & 0 & \gamma_3 & \gamma_4\\
\gamma_1 & 0 & 0 & \gamma_2 & 0\\
\end{array} \right],
F= \left[ \begin{array}{ccccc}
0 & 0 & 0 & 0 & 0\\
0 & 0 & 0 & \beta_1 & \beta_2\\
0 & 0 & 0 & \beta_3 & 0\\
0 & 0 & 0 & 0 & 0\\
0 & 0 & 0 & 0 & 0\\
\end{array} \right]\qquad
\end{equation*}
}

\normalsize


The transition matrix is given in \eqref{eq:long}.
\textnormal{By setting the diagonal element to be 1 and off-diagonal element to be 0, it can easily be proved that solution exists only if all the variables are chosen from a field with characteristic 2. Note that the total number of variables used in this formulation is 15 and the size of $F$ is only $5\times 5$. However, if the original line graph is used, the number of variables is 28 and the dimension of the adjacent matrix is $18\times 18$. Therefore, the total amount of reduction in problem dimension is $46\%$ and in complexity is $92\%$.}
\end{example}

\begin{figure}[htb]
\centering
\includegraphics[scale=0.42]{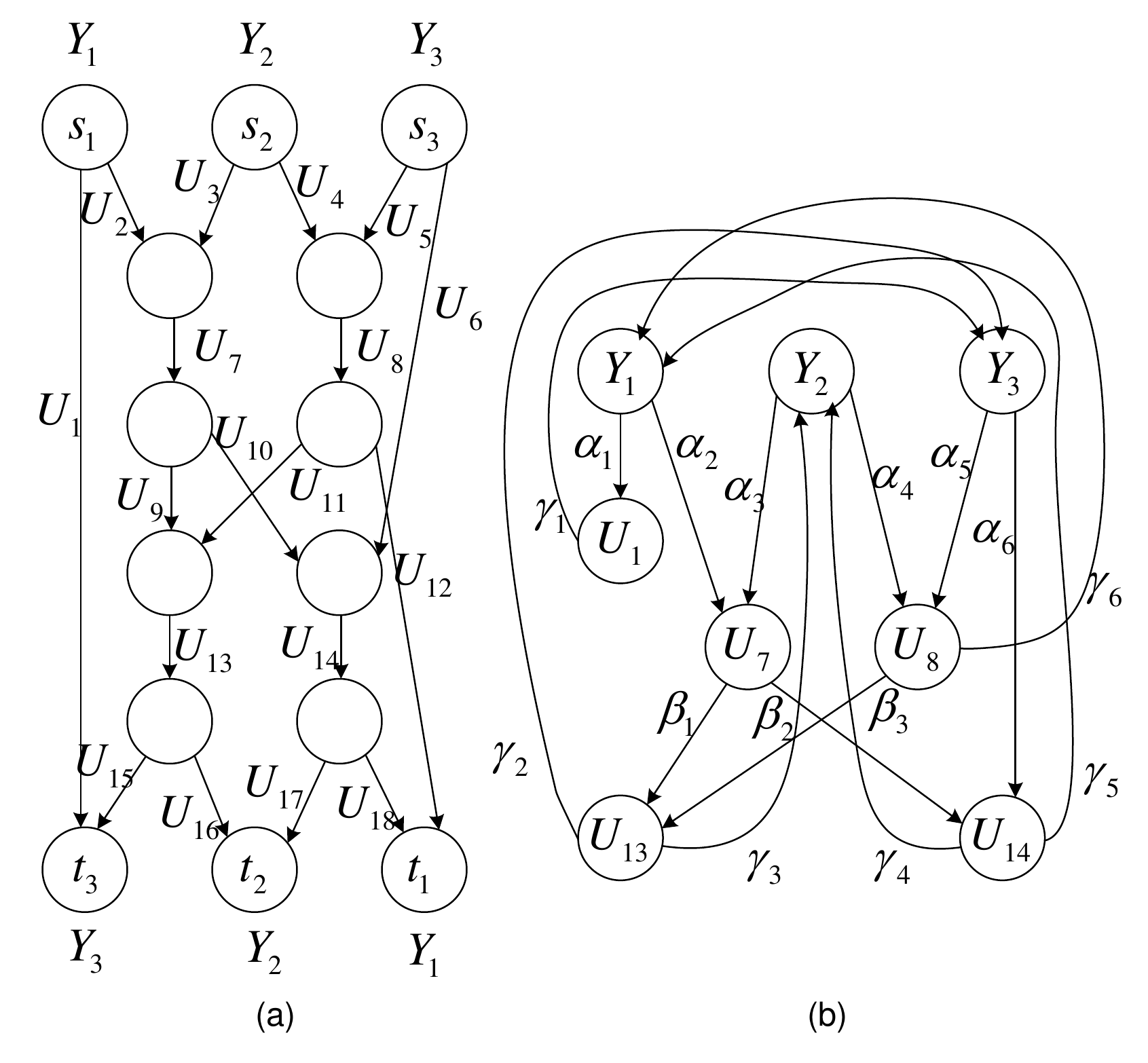}
\caption{A network that is solvable only over fields with characteristic 2}
\label{binaryNetwork}
\end{figure}

\section{Conclusions}
In this paper, reduced FDG is obtained from the original FDG by removing all the edges where network coding cannot be performed and where network coding is not necessary. With reduced FDG, various capacity bounds of the network can be computed much more efficiently without any loss in the tightness. Two applications of reduced FDG are discussed in this paper, one is in the computation of LP bound and the other is in algebraic formulation of scalar linear network coding. In both cases, the reduced FDG significantly reduces the problem size and complexity. Note that the applications of the reduced FDG is not limited to the areas discussed in this paper. It may also help to identify the encoding complexity and simplify the code construction.

\section*{Acknowledgement}
The work of Xiaoli Xu and Yong Liang Guan was  supported by the Advanced Communications Research Program DSOCL06271, a research grant from the Directorate of Research and Technology (DRTech), Ministry of Defence, Singapore. The work of Satyajit Thakor was partially supported by a grant from the University Grants Committee of the Hong Kong Special Administrative Region, China (Project No. AoE/E-02/08).

\end{document}